\documentclass[conference]{IEEEtran}
\IEEEoverridecommandlockouts
\usepackage{amsmath,amssymb,amsfonts}
\usepackage{color}
\usepackage{xcolor}
\usepackage{bm}
\usepackage{graphicx}
\usepackage{multirow}
\usepackage{algorithm}
\usepackage{algorithmic}
\usepackage{booktabs}
\usepackage{array}
\usepackage{amsthm}
\usepackage{lipsum}
\usepackage{enumerate}
\usepackage{stfloats}
\usepackage{subfigure}
\usepackage{cases}
\usepackage{cite}
\usepackage{mathtools}
\usepackage{cuted}
\def\C{\mathbb{C}}

\def\Y{\mathbf{Y}}

\allowdisplaybreaks

\newtheorem{remark}{Remark}

\newtheorem{proposition}{Proposition}

\newtheoremstyle{noparens}%
  {}{}%
  {\itshape}{}%
  {\bfseries}{.}%
  { }%
  {\thmname{#1}\thmnumber{ #2}\mdseries\thmnote{ #3}}

\def\BibTeX{{\rm B\kern-.05em{\sc i\kern-.025em b}\kern-.08em
    T\kern-.1667em\lower.7ex\hbox{E}\kern-.125emX}}
\begin{document}

\title{Hybrid Digital and Microwave Linear Analog Computer (MiLAC)-aided Beamforming for Multiuser MIMO-OFDM Systems\\
}

\author{\IEEEauthorblockN{Yiyang Peng, Zheyu Wu, and Bruno Clerckx}
\IEEEauthorblockA{\textit{Department of Electrical and Electronic Engineering} \\
\textit{Imperial College London}\\
London SW7 2AZ, U.K. \\
Email: \{yiyang.peng22, zheyu.wu, b.clerckx\}@imperial.ac.uk}
}

\maketitle

\begin{abstract}
Microwave linear analog computing (MiLAC) has recently emerged as a promising architecture for analog-domain beamforming. In particular, a hybrid digital-MiLAC architecture was proposed and was shown to achieve fully-digital beamforming flexibility in narrowband systems when the number of RF chains equals the number of data streams. However, its performance in wideband systems remains unexplored. This paper presents the first study of hybrid digital-MiLAC beamforming for wideband multi-user multiple-input single-output (MU-MISO) systems. We first characterize the minimum number of radio-frequency (RF) chains required for hybrid digital-MiLAC beamforming to realize an arbitrary set of fully-digital beamforming matrices across all subcarriers. It turns out that, unlike in the narrowband case, a larger number of RF chains is generally required in frequency-selective channels to achieve fully-digital beamforming flexibility, which may be unfavorable in practice. To study the performance of hybrid digital-MiLAC beamforming with a limited number of RF chains, we then formulate the average sum-rate maximization problem and develop an efficient weighted minimum mean-square error (WMMSE)-based algorithm for beamforming design. Simulation results show that hybrid digital-MiLAC beamforming consistently outperforms conventional hybrid digital-analog beamforming, and achieves $89.93\%$ of the fully-digital sum-rate while using only $12.5\%$ of the RF chains in highly frequency-selective channels.
\end{abstract}

\begin{IEEEkeywords}
Microwave linear analog computer (MiLAC), orthogonal frequency division multiplexing (OFDM), beamforming, optimization.
\end{IEEEkeywords}

\section{Introduction}
To meet the requirements of future sixth-generation (6G) wireless systems, wireless networks are evolving toward upper mid-band frequencies and very large-scale antenna arrays, giving rise to massive or gigantic multiple-input multiple-output (MIMO) systems \cite{6G}. However, in conventional fully-digital beamforming architectures, each antenna is connected to one radio-frequency (RF) chain. When the number of antennas becomes very large, the hardware cost, power consumption, and signal processing complexity become prohibitive. In order to reduce the number of RF chains, hybrid digital-analog beamforming architecture has been widely studied \cite{hybrid_ofdm,hybrid_narrowband}. In hybrid beamforming, a low-dimensional digital beamformer is followed by an analog RF network, usually implemented by phase shifters. This can greatly reduce the RF-chain requirement, but the phase shifter network imposes structural constraints and therefore limits its design flexibility.

Recently, microwave linear analog computer (MiLAC) has emerged as a novel architecture that can implement linear transformations directly in the analog domain \cite{nerini_milac_partI,nerini_milac_partII}. A MiLAC is a reconfigurable multiport microwave network composed of tunable components, which enables a broad class of linear operations with significantly reduced computational complexity \cite{nerini_milac_partI,nerini_milac_partII,qiaosen_milac_ce}. When applied to beamforming, MiLAC processes RF signals through a multiport network whose input ports are connected to RF chains and output ports are connected to transmit antennas \cite{nerini_milac_capacity,nerini_milac_stem,zheyu_milac_beamforming,tianyu_milac_beamforming}. 
Under practical lossless and reciprocal constraints, MiLAC-aided beamforming was shown to achieve the same capacity as digital beamforming in point-to-point MIMO systems \cite{nerini_milac_capacity}. However, for multi-user multiple-input single-output (MU-MISO) systems, it generally cannot realize arbitrary digital beamforming matrices \cite{zheyu_milac_beamforming,tianyu_milac_beamforming}. To overcome this limitation, a hybrid digital-MiLAC architecture was proposed in \cite{zheyu_milac_beamforming}, and was shown to achieve fully-digital beamforming flexibility when the number of RF chains equals the number of data streams.

However, existing MiLAC works are limited to narrowband systems \cite{nerini_milac_partI,nerini_milac_partII,nerini_milac_capacity,nerini_milac_stem,qiaosen_milac_ce,zheyu_milac_beamforming,tianyu_milac_beamforming}, and extending MiLAC-aided beamforming to wideband systems is nontrivial. In a wideband system, the digital beamforming matrices may vary significantly across subcarriers. In contrast, under the MiLAC model considered in existing works \cite{nerini_milac_capacity,nerini_milac_stem,zheyu_milac_beamforming,tianyu_milac_beamforming}, the same MiLAC beamforming matrix is shared by all subcarriers. For frequency-selective channels, such a common MiLAC beamforming matrix is generally insufficient to realize the flexibility of fully-digital beamforming. Therefore, the conclusions established for narrowband systems do not directly apply to wideband systems. 

Motivated by the above limitation, this paper investigates the performance of hybrid digital-MiLAC architecture in wideband systems, where a low-dimensional digital beamformer adapts the beamforming across subcarriers, followed by a lossless, reciprocal, and frequency-flat  MiLAC that provides the main analog beamforming transformation. The main contributions are summarized as follows. \textit{First}, we characterize the minimum number of RF chains required for hybrid digital-MiLAC beamforming to realize an arbitrary set of fully-digital beamforming matrices over all subcarriers, revealing a fundamental difference from the narrowband system. \textit{Second}, we formulate the average sum-rate maximization problem for the considered wideband orthogonal frequency division multiplexing (OFDM) system and develop an efficient weighted minimum mean-square error (WMMSE)-based algorithm for beamforming design. \textit{Third}, we present simulation results to demonstrate the effectiveness and performance tradeoff of the hybrid digital-MiLAC beamforming compared to other schemes in wideband systems.

\section{System Model}

\subsection{System Model}

\begin{figure*}
    \centering
    \includegraphics[width=0.8\textwidth]{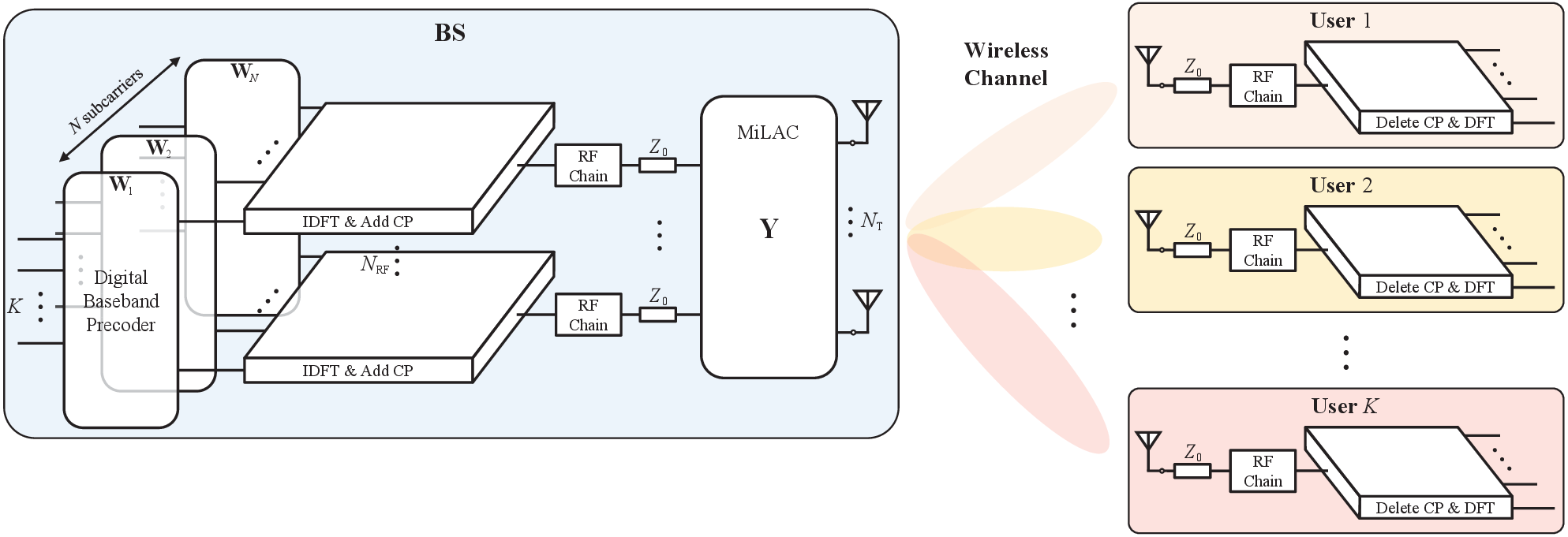}
    \caption{The illustration of a hybrid digital-MiLAC-aided MU-MISO-OFDM system.}\label{fig:system_model}
\end{figure*}

We consider a wideband multi-user multiple-input single-output (MU-MISO) system, employing an orthogonal frequency division multiplexing (OFDM) transmission scheme with $N$ subcarriers and a hybrid digital-MiLAC beamforming architecture, as shown in Fig. \ref{fig:system_model}. The base station (BS) is equipped with $N_\text{T}$ transmit antennas and $N_\text{RF}$ RF chains, and serves $K$ single-antenna users which are scheduled over the entire bandwidth. Let $\mathbf{s}_n$ denote the data symbol vector for all users on the $n$-th subcarrier, where $\mathbf{s}_n= [s_{1,n},s_{2,n},\ldots,s_{K,n}]^T\in\C^{K\times 1},~\forall n\in\mathcal{N}=\{1,2,\ldots,N\}$. Here, $s_{k,n}$ is the information symbol intended for user $k$ on subcarrier $n$, and we assume $\mathbb{E}\{\mathbf{s}_n\mathbf{s}_n^H\}=\mathbf{I}_K,~\forall n\in\mathcal{N}$. At each subcarrier $n$, the BS first precodes $\mathbf{s}_n$ by a digital precoder $\mathbf{W}_n\in\C^{N_\text{RF}\times K}$ in the frequency domain, given by $\mathbf{W}_n = [\mathbf{w}_{1,n},\mathbf{w}_{2,n},\ldots,\mathbf{w}_{K,n}]$, where $\mathbf{w}_{k,n}\in\C^{N_\text{RF}\times 1}$ denotes the digital precoder for user $k$ on the $n$-th subcarrier, $\forall k\in\mathcal{K}=\{1,2,\ldots,K\},~\forall n\in\mathcal{N}$. The overall digitally precoded frequency-domain signal $\mathbf{s}'\in\C^{N_\text{RF}N\times 1}$ can be written as
\begin{equation}
    \mathbf{s}'=\mathbf{W}\mathbf{s},
\end{equation}
where $\mathbf{W}=\text{blkdiag}(\mathbf{W}_1,\mathbf{W}_2,\ldots,\mathbf{W}_N)$, and $\mathbf{s}=[\mathbf{s}_1^T,\mathbf{s}_2^T,\ldots,\mathbf{s}_N^T]^T$. Then $\mathbf{s}'$ is converted to the time domain by performing $N_\text{RF}$ $N$-point inverse discrete Fourier transforms (IDFTs), which yields the overall time-domain signal $\bar{\mathbf{s}}\in\C^{N_\text{RF}N\times 1}$, as
\begin{equation}
    \bar{\mathbf{s}}=(\mathbf{F}^H \otimes \mathbf{I}_{N_\text{RF}})\mathbf{s}',\label{eq:s_bar}
\end{equation}
where $\mathbf{F}\in\C^{N\times N}$ is the normalized DFT matrix defined as $[\mathbf{F}]_{i,j}\triangleq\frac{1}{\sqrt{N}}e^{\frac{-\jmath 2\pi}{N}(i-1)(j-1)},~\forall i,j\in\mathcal{N}$. After insertion of the cyclic prefix (CP)\footnote{We assume perfect CP insertion and removal at the BS and users, respectively, with a CP length no smaller than the maximum delay of the discrete-time
channel impulse response. Hence, for notational simplicity, we only model the useful OFDM block and omit the explicit CP-appended signal.} to $\bar{\mathbf{s}}$, the digital baseband signal is converted to analog and up-converted to passband through the radio-frequency (RF) chains. As shown in Fig. \ref{fig:system_model}, each MiLAC input port is driven by an RF chain and is referenced to the impedance $Z_0=50\ \Omega$. In this paper, we assume that the response of the MiLAC is approximately constant over the occupied OFDM bandwidth. Accordingly, the MiLAC can be modeled by a matrix $\mathbf{P}^\text{M}\in\C^{N_\text{T}\times N_\text{RF}}$, which characterizes the equivalent baseband input-output relationship\footnote{Although the MiLAC physically processes passband signals, throughout this paper we adopt the standard equivalent complex-valued baseband representation of a linear microwave network.} of the MiLAC \cite{nerini_milac_partI,nerini_milac_partII}. Under this frequency-flat assumption, at each time instant the MiLAC applies the same matrix $\mathbf{P}^{\text{M}}$ to the $N_{\text{RF}}$-dimensional time-domain sample vector of the OFDM block $\bar{\mathbf{s}}$, i.e.,
\begin{equation}
    \bar{\mathbf{x}}=(\mathbf{I}_N \otimes \mathbf{P}^\text{M})\bar{\mathbf{s}},\label{eq:x_time}
\end{equation}
where $\bar{\mathbf{x}}=[\bar{\mathbf{x}}_1^T,\bar{\mathbf{x}}_2^T,\ldots,\bar{\mathbf{x}}_N^T]^T\in\C^{N_\text{T}N\times 1}$ denotes the MiLAC output of an OFDM block, which is transmitted by the BS, with $\bar{\mathbf{x}}_m\in\C^{N_\text{T}\times 1},~\forall m\in\mathcal{N}$, denoting the $m$-th time sample of the OFDM signal. Equivalently, in the frequency domain, the same matrix $\mathbf{P}^{\text{M}}$ is shared by all subcarriers. The corresponding frequency-domain transmit vector is denoted by $\mathbf{x}\in\C^{N_\text{T}N\times 1}$, which is given by
\begin{align}
    \mathbf{x}&=(\mathbf{F} \otimes \mathbf{I}_{N_\text{T}})\bar{\mathbf{x}}\notag\\
    &=(\mathbf{F} \otimes \mathbf{I}_{N_\text{T}})(\mathbf{I}_N \otimes \mathbf{P}^\text{M})(\mathbf{F}^H \otimes \mathbf{I}_{N_\text{RF}})\mathbf{W}\mathbf{s}\notag\\
    &=(\mathbf{I}_N \otimes \mathbf{P}^\text{M})\mathbf{W}\mathbf{s}.\label{eq:x_freq}
\end{align}
Hence, the frequency-domain transmitted signal on the $n$-th subcarrier is
\begin{equation}
\mathbf{x}_n=\mathbf{P}^\text{M}\mathbf{W}_n\mathbf{s}_n,~\forall n\in\mathcal{N},
\end{equation}
where $\mathbf{x}_n=[\mathbf{x}]_{(n-1)N_\text{T}+1:n N_\text{T}}$. Assuming perfect CP insertion and removal, the multipath channel convolution over the useful OFDM block becomes circular and can therefore be diagonalized by the DFT, leading to parallel subcarrier channels \cite{mimo_ofdm}. For user $k$, the received signal on subcarrier $n$ is given by
\begin{equation}
    r_{k,n}=\mathbf{h}_{k,n}^H\mathbf{P}^\text{M} \mathbf{w}_{k,n}s_{k,n}+ \sum_{j\neq k}\mathbf{h}_{k,n}^H\mathbf{P}^\text{M} \mathbf{w}_{j,n}s_{j,n}+z_{k,n},
\end{equation}
where $\mathbf{h}_{k,n}\in\C^{N_\text{T}\times 1}$ denotes the frequency-domain channel from the BS to user $k$ on the $n$-th subcarrier, and $z_{k,n}\sim\mathcal{CN}(0,\sigma_{k,n}^2)$ is the additive white Gaussian noise (AWGN). 

\subsection{MiLAC Modeling}
A MiLAC can be modeled as a reconfigurable microwave network \cite{nerini_milac_partI,nerini_milac_partII}. In the considered hybrid digital-MiLAC beamforming architecture, the MiLAC is represented as an $(N_{\text{RF}}+N_{\text{T}})$-port network, where the first $N_{\text{RF}}$ ports are connected to the RF chains and the remaining $N_{\text{T}}$ ports are connected to the transmit antennas. In this paper, we consider the fully-connected MiLAC architecture studied in \cite{nerini_milac_partI,nerini_milac_partII,nerini_milac_capacity,zheyu_milac_beamforming}, where each port is connected to the ground through a tunable component and is also interconnected to every other port through tunable components. Define $\Y\in\C^{(N_{\text{RF}}+N_{\text{T}})\times(N_{\text{RF}}+N_{\text{T}})}$ as the admittance matrix of MiLAC, according to \cite{nerini_milac_partI,nerini_milac_partII}, $\mathbf{P}^\text{M}$ is determined by $\mathbf{Y}$ as
\begin{equation}
    \mathbf{P}^\text{M} = \left[(\mathbf{I}_{N_{\text{RF}}+N_{\text{T}}}+Z_0\mathbf{Y})^{-1}\right]_{N_{\text{RF}}+1:N_{\text{RF}}+N_{\text{T}},1:N_{\text{RF}}}.
\end{equation}
Physically, the MiLAC can be implemented using tunable admittance components, which gives rise to a reconfigurable admittance matrix $\mathbf{Y}$. Equivalently, the MiLAC can also be characterized by its scattering matrix $\bm\Phi\in\C^{(N_{\text{RF}}+N_{\text{T}})\times(N_{\text{RF}}+N_{\text{T}})}$, which is related to $\mathbf{Y}$ through
\begin{equation}
    \bm\Phi=(\mathbf{I}_{N_{\text{RF}}+N_{\text{T}}}+Z_0\mathbf{Y})^{-1}(\mathbf{I}_{N_{\text{RF}}+N_{\text{T}}}-Z_0\mathbf{Y}).
\end{equation}
As shown in \cite{nerini_milac_capacity,nerini_milac_stem}, the relationship between $\mathbf{P}^\text{M}$ and $\bm\Phi$ can be further simplified as
\begin{equation}
    \mathbf{P}^\text{M} = \frac{1}{2}[\bm\Phi]_{N_{\text{RF}}+1:N_{\text{RF}}+N_{\text{T}},1:N_{\text{RF}}}.
\end{equation}
Moreover, the tunable components employed in the MiLAC are typically reciprocal devices, e.g., varactors. In this paper, we consider the reciprocal and lossless MiLAC \cite{nerini_milac_capacity,nerini_milac_stem,zheyu_milac_beamforming}, which mathematically results in a symmetric and unitary scattering matrix \cite{pozar2009microwave}, respectively, i.e.,
\begin{equation}
\bm\Phi=\bm\Phi^T,\bm\Phi^H\bm\Phi=\mathbf{I}_{N_{\text{RF}}+N_{\text{T}}}.
\end{equation}

\section{Hybrid Digital-MiLAC-Aided Beamforming Design For The MU-MISO-OFDM System}
\subsection{Problem Formulation}
Define $\mathbf{P}\triangleq 2\mathbf{P}^\text{M}$ for notation simplicity. A key result established in \cite{zheyu_milac_beamforming} is that the lossless and reciprocal constraints of MiLAC on the scattering matrix $\bm{\Phi}$ can be equivalently expressed as a spectral-norm constraint on the matrix $\mathbf{P}$, i.e.,
\begin{equation}
    \|\mathbf{P}\|_2 \leq 1.
\end{equation}
The achievable rate for user $k$ on the $n$-th subcarrier can be expressed as
\begin{equation}
    R_{k,n}= \log_2 \left(1+\frac{\frac{1}{4}|\mathbf{h}_{k,n}^H\mathbf{P} \mathbf{w}_{k,n}|^2}{\frac{1}{4}\sum_{j\neq k}|\mathbf{h}_{k,n}^H\mathbf{P} \mathbf{w}_{j,n}|^2+\sigma_{k,n}^2}\right),
\end{equation}
where the factor $\frac{1}{4}$ is caused by $\mathbf{P}= 2\mathbf{P}^\text{M}$. We aim to maximize the average sum-rate for the MU-MISO-OFDM system with hybrid digital-MiLAC beamforming architecture subject to the constraints of lossless and reciprocal MiLAC and digital power budget. The problem is formulated as
\begin{subequations}\label{eq:max_sum_rate_primal}
    \begin{align}                    \max_{\mathbf{P},\{\mathbf{W}_n\}_{n=1}^N}~& \frac{1}{N}\sum_{n=1}^N\sum_{k=1}^K  \log_2 \left(1+\frac{|\mathbf{h}_{k,n}^H\mathbf{P} \mathbf{w}_{k,n}|^2}{\sum\limits_{j\neq k}|\mathbf{h}_{k,n}^H\mathbf{P} \mathbf{w}_{j,n}|^2+\tilde{\sigma}_{k,n}^2}\right)\label{eq:max_sum_rate_primal_a}\\
    \mathrm{s.t.}~~~~~&\|\mathbf{P}\|_2\leq 1,\label{eq:max_sum_rate_primal_b}\\
    &\sum_{n=1}^N \|\mathbf{W}_n\|_F^2 \leq P_\text{T},\label{eq:max_sum_rate_primal_c}
    \end{align}
\end{subequations}
where $\tilde{\sigma}_{k,n}^2\triangleq 4\sigma_{k,n}^2,~\forall n\in\mathcal{N},~\forall k\in\mathcal{K}$, and $P_\text{T}$ denotes the total digital power at the BS.

\subsection{Minimum Number of RF Chains to Realize Fully-Digital Beamformers}

In the narrowband multi-user MISO system, it was established in \cite{zheyu_milac_beamforming} that MiLAC-aided beamforming cannot realize any fully-digital beamforming matrix, while hybrid digital-MiLAC beamforming can realize any fully-digital beamforming matrix with the same number of RF chains as the number of streams. Extending this characterization to wideband OFDM systems is nontrivial, because the  digital beamforming matrix can be designed independently for each subcarrier, whereas the MiLAC response $\mathbf{P}^\text{M}$ is shared across all subcarriers.
 This naturally leads to the following question: what is the minimum number of RF chains required by hybrid digital-MiLAC beamforming to realize an arbitrary set of fully-digital beamforming matrices in the wideband case? We provide the answer in the following Proposition~\ref{proposition:hybrid_rf_chain_digital}.

\begin{proposition}\label{proposition:hybrid_rf_chain_digital}
Hybrid digital-MiLAC beamforming can realize any fully digital beamforming matrices $\{\mathbf{W}^\textup{D}_n\}_{n=1}^N$ if and only if the number of RF chains in the hybrid digital-MiLAC architecture satisfies $N_\text{RF}\geq\min\{N_\text{T},KN\}$.
\end{proposition}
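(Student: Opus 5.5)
The plan is to reduce the statement to a rank condition, using the spectral-norm characterization $\|\mathbf{P}\|_2\le 1$ from \cite{zheyu_milac_beamforming}. Realizing $\{\mathbf{W}^\textup{D}_n\}_{n=1}^N$ means finding $\mathbf{P}\in\C^{N_\text{T}\times N_\text{RF}}$ with $\|\mathbf{P}\|_2\le1$ and digital precoders $\mathbf{W}_n\in\C^{N_\text{RF}\times K}$ such that $\frac{1}{2}\mathbf{P}\mathbf{W}_n=\mathbf{W}^\textup{D}_n$ for all $n$. The factor $\frac12$ can be absorbed into $\mathbf{W}_n$, and the digital power budget is not part of the realizability notion here. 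So the task is to decide when a single $\mathbf{P}$ exists whose column space contains every column of every $\mathbf{W}^\textup{D}_n$. Stack these matrices as
\begin{equation*}
\mathbf{W}^\textup{D}\triangleq[\mathbf{W}^\textup{D}_1,\ldots,\mathbf{W}^\textup{D}_N]\in\C^{N_\text{T}\times KN}.
\end{equation*}
The condition then becomes $\mathcal{R}(\mathbf{W}^\textup{D})\subseteq\mathcal{R}(\mathbf{P})$, where $\mathcal{R}(\cdot)$ denotes the column space.

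For sufficiency, assume $N_\text{RF}\ge\min\{N_\text{T},KN\}$ and let $r=\operatorname{rank}(\mathbf{W}^\textup{D})\le\min\{N_\text{T},KN\}\le N_\text{RF}$.
\begin{itemize}
\item Take an orthonormal basis $\mathbf{U}\in\C^{N_\text{T}\times r}$ of $\mathcal{R}(\mathbf{W}^\textup{D})$.
\item Set $\mathbf{P}=[\mathbf{U},\mathbf{0}]\in\C^{N_\text{T}\times N_\text{RF}}$. Padding with zero columns keeps $\|\mathbf{P}\|_2=1$.
\item Set $\mathbf{W}_n=2[\mathbf{U}^H\mathbf{W}^\textup{D}_n;\mathbf{0}]$. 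Then $\frac12\mathbf{P}\mathbf{W}_n=\mathbf{U}\mathbf{U}^H\mathbf{W}^\textup{D}_n=\mathbf{W}^\textup{D}_n$, since $\mathbf{U}\mathbf{U}^H$ is the orthogonal projector onto a space containing these columns.
\end{itemize}
By the equivalence in \cite{zheyu_milac_beamforming}, this $\mathbf{P}$ corresponds to a symmetric unitary $\bm\Phi$, i.e., a physically valid lossless reciprocal MiLAC. I would note that this step relies entirely on that cited result.

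For necessity, ``any'' matrices means we may pick an adversarial set. If $N_\text{RF}<\min\{N_\text{T},KN\}$, choose $\{\mathbf{W}^\textup{D}_n\}$ so that $\operatorname{rank}(\mathbf{W}^\textup{D})=\min\{N_\text{T},KN\}$; for example, take the columns of the stacked matrix to be distinct standard basis vectors, or columns of the identity when $KN\ge N_\text{T}$. Any realization then gives $\mathbf{W}^\textup{D}=\frac12\mathbf{P}[\mathbf{W}_1,\ldots,\mathbf{W}_N]$, so $\operatorname{rank}(\mathbf{W}^\textup{D})\le\operatorname{rank}(\mathbf{P})\le N_\text{RF}$, which is a contradiction.

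The main subtlety is in the sufficiency direction: the construction must respect the lossless and reciprocal physical constraints, not just produce some linear map. This is handled by invoking the spectral-norm equivalence and choosing a $\mathbf{P}$ with orthonormal columns. A second point to state explicitly is that per-subcarrier power is unconstrained in the realizability definition, so scaling $\mathbf{W}_n$ by $2$ is permitted. Finally, the result contrasts with the narrowband case: for $N=1$ the condition reduces to $N_\text{RF}\ge\min\{N_\text{T},K\}$, which is consistent with the earlier result in \cite{zheyu_milac_beamforming}.
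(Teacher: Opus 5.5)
Your construction and rank argument are the paper's. The paper takes the compact SVD $\mathbf{W}^\text{D}=\mathbf{U}\bm\Sigma\mathbf{V}^H$ of the stacked matrix and sets $\mathbf{P}=\mathbf{U}$ and $\tilde{\mathbf{W}}=\bm\Sigma\mathbf{V}^H=\mathbf{U}^H\mathbf{W}^\text{D}$, which is your orthonormal-basis choice. Your zero-padding for $N_\text{RF}>\min\{N_\text{T},KN\}$ and your explicit full-rank example are, if anything, more explicit than the paper. Necessity uses the same bound $\operatorname{rank}(\mathbf{P}\tilde{\mathbf{W}})\le N_\text{RF}$.

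The one substantive difference, and it is a gap relative to what the paper proves, is the power budget. In the paper, ``realize'' means feasibility for problem \eqref{eq:max_sum_rate_primal}. The hybrid beamformer is $\mathbf{P}\mathbf{W}_n$ with $\mathbf{P}=2\mathbf{P}^\text{M}$, the factor $\frac12$ being absorbed into $\tilde{\sigma}_{k,n}^2=4\sigma_{k,n}^2$. For every $\mathbf{W}^\text{D}$ with $\|\mathbf{W}^\text{D}\|_F^2\le P_\text{T}$, one must find $\mathbf{P}\tilde{\mathbf{W}}=\mathbf{W}^\text{D}$ satisfying both \eqref{eq:max_sum_rate_primal_b} and \eqref{eq:max_sum_rate_primal_c}. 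The proof explicitly checks $\|\tilde{\mathbf{W}}\|_F^2\le P_\text{T}$, and this is what gives the proposition its meaning of fully-digital flexibility under the same power budget.

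You instead keep the $\frac12$, declare the budget irrelevant, and compensate by scaling $\mathbf{W}_n$ by $2$. That spends $4\|\mathbf{W}^\text{D}\|_F^2$ of digital power, so your sufficiency part does not establish the intended claim. In your normalization the power-constrained statement would in fact be false for every $N_\text{RF}$, since $\|\frac12\mathbf{P}\tilde{\mathbf{W}}\|_F\le\frac12\|\mathbf{P}\|_2\|\tilde{\mathbf{W}}\|_F\le\frac12\sqrt{P_\text{T}}$. That shows it is your normalization, not the power constraint, that should be dropped.

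The repair is immediate. Compare $\mathbf{P}\tilde{\mathbf{W}}$ directly with $\mathbf{W}^\text{D}$, drop the factor $2$, and note that $\sum_n\|\mathbf{W}_n\|_F^2=\|\mathbf{U}^H\mathbf{W}^\text{D}\|_F^2=\|\mathbf{U}\mathbf{U}^H\mathbf{W}^\text{D}\|_F^2=\|\mathbf{W}^\text{D}\|_F^2\le P_\text{T}$. The orthonormal-column $\mathbf{P}$ thus preserves digital power exactly, which is precisely the paper's check. Your necessity argument is unaffected, since your adversarial full-rank example can be scaled to meet the budget without changing its rank.
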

\begin{proof}
We introduce the following notations for the hybrid digital-MiLAC beamforming matrix. Let $\mathbf{B}_n=\mathbf{P}\mathbf{W}_n\in\C^{N_\text{T}\times K}$, $\tilde{\mathbf{B}}=[\mathbf{B}_1,\mathbf{B}_2,\ldots,\mathbf{B}_N]\in\C^{N_\text{T}\times KN}$, and $\tilde{\mathbf{W}}=[\mathbf{W}_1,\mathbf{W}_2,\ldots,\mathbf{W}_N]\in\C^{N_\text{RF}\times KN}$, then we have $\tilde{\mathbf{B}}=\mathbf{P}\tilde{\mathbf{W}}$ and $\sum_{n=1}^N \|\mathbf{W}_n\|_F^2= \|\tilde{\mathbf{W}}\|_F^2$. 

We first prove the sufficiency. Suppose that $N_\text{RF}\geq\min\{N_\text{T},KN\}$. For any digital beamforming matrix $\mathbf{W}^\text{D}=[\mathbf{W}^\text{D}_1,\mathbf{W}^\text{D}_2,\ldots,\mathbf{W}^\text{D}_N]\in\C^{N_\text{T}\times KN}$ with $\|\mathbf{W}^\text{D}\|_F^2\leq P_\text{T}$, let $\mathbf{W}^\text{D}=\mathbf{U}\bm\Sigma\mathbf{V}^H$ be its compact SVD, where $\mathbf{U}\in\C^{N_\text{T}\times l}$, $\bm\Sigma\in\C^{l\times l},\mathbf{V}\in\C^{KN\times l}$, with $l=\min\{N_\text{T},KN\}$. We show that $\mathbf{W}^{\text{D}}$ can be achieved by the hybrid digital-MiLAC architecture by letting  $\mathbf{P}=\mathbf{U}$ and $\tilde{\mathbf{W}}=\bm\Sigma\mathbf{V}^H$. This specification immediately yields $\mathbf{W}^{\text{D}}=\mathbf{P}\tilde{\mathbf{W}}$. 
Note that  $\mathbf{U}^H\mathbf{U}=\mathbf{I}_l$ and $\|\bm\Sigma\mathbf{V}^H\|_F^2=\|\mathbf{U}^H\mathbf{W}^\text{D}\|_F^2=\|\mathbf{U}\mathbf{U}^H\mathbf{W}^\text{D}\|_F^2=\|\mathbf{W}^\text{D}\|_F^2$. It follows that $\|\mathbf{P}\|_2=1$, which satisfies the MiLAC constraint in \eqref{eq:max_sum_rate_primal_b},  and $\|\tilde{\mathbf{W}}\|_F^2\leq P_\text{T}$, which satisfies the digital-power constraint in \eqref{eq:max_sum_rate_primal_c}. Hence, we show that any $\mathbf{W}^\text{D}$ can be realized when $N_\text{RF}=\min\{N_\text{T},KN\}$. 

Next, we prove the necessity. Suppose $N_\text{RF}<\min\{N_\text{T},KN\}$, then $\text{rank}(\mathbf{P}\tilde{\mathbf{W}})\leq N_\text{RF}<l$. Therefore, any full-rank digital beamforming matrix $\mathbf{W}^\text{D}$ cannot be realized by $\mathbf{P}\tilde{\mathbf{W}}$. That completes the proof.
\end{proof}
Proposition 1 generalizes the result in \cite{zheyu_milac_beamforming} to the wideband case. When $N=1$, it reduces to the result in \cite{zheyu_milac_beamforming}. However, in general, $KN>N_\text{T}$, thus realizing arbitrary fully-digital beamforming matrices in OFDM systems requires $N_\text{RF}\geq N_\text{T}$. In this sense, hybrid digital-MiLAC beamforming does not offer RF-chain reduction in general wideband systems. 
In the following remark, we show that the required number of RF chains can be further reduced to realize   rank-deficient beamforming matrices.
\begin{remark}\label{remark1}
To achieve a specific digital beamforming matrix $\mathbf{W}^{\textup{D}}$, the required number of RF chains is $\text{rank}(\mathbf{W}^\textup{D})$, which can be smaller than the number given in Proposition~\ref{proposition:hybrid_rf_chain_digital} when $\mathbf{W}^{\textup{D}}$ is rank deficient. 
 Specifically, for a given matrix $\mathbf{W}^\textup{D},$ it can always be decomposed as $\mathbf{W}^\textup{D}=\mathbf{U}_W\bm\Sigma_W\mathbf{V}_W^H$, where $\mathbf{U}_W\in\C^{N_\text{T}\times l'}$, $\bm\Sigma_W\in\C^{l'\times l'},\mathbf{V}_W\in\C^{KN\times l'}$, with $l'=\text{rank}(\mathbf{W}^\textup{D})\leq \min\{N_\text{T},KN\}$. Let $N_\text{RF}=l'$, by setting $\mathbf{P}=\mathbf{U}_W$ and $\tilde{\mathbf{W}}=\bm\Sigma_W\mathbf{V}_W^H$, we can use the same proof in Proposition 1 to show that $\mathbf{P}$ satisfies \eqref{eq:max_sum_rate_primal_b} and $\tilde{\mathbf{W}}$ satisfies \eqref{eq:max_sum_rate_primal_c}. Therefore, $\mathbf{W}^\textup{D}$ can be realized by the hybrid digital-MiLAC using $l'$ RF chains.
\end{remark}
An important implication of Remark \ref{remark1} is that, as the OFDM system becomes more frequency-flat, fewer RF chains are required to achieve fully-digital performance. This is because optimal beamforming matrices $\{\mathbf{W}_n^\text{D}\}$ across subcarriers become more similar for frequency-flat channels, and thus the aggregated matrix $\mathbf{W}^{\text{D}}$ is more likely to be rank-deficient. 

To evaluate the performance of the hybrid digital-MiLAC beamforming scheme with a general number of RF chains, we propose an efficient algorithm for solving \eqref{eq:max_sum_rate_primal} in the following subsection.

\subsection{WMMSE Approach of Solving \eqref{eq:max_sum_rate_primal}}
We apply the weighted minimum mean-square error (WMMSE) algorithm to solve the average sum-rate maximization problem in \eqref{eq:max_sum_rate_primal}. Denote $u_{k,n}\in\C$ as the receive equalizer for user $k$ on subcarrier $n$. Define $\tilde{r}_{k,n}\triangleq 2r_{k,n}$, which is given by
\begin{equation}
\tilde{r}_{k,n}=\mathbf{h}_{k,n}^H\mathbf{P}\mathbf{W}_n\mathbf{s}_n+\tilde{z}_{k,n},~\forall n\in\mathcal{N},~\forall k\in\mathcal{K},
\end{equation}
where $\tilde{z}_{k,n}=2z_{k,n}$.
The estimated symbol of user $k$ on subcarrier $n$ is given by $\hat{s}_{k,n}=u_{k,n}^*\tilde{r}_{k,n}$. Then, the mean-square error (MSE) between symbol $s_{k,n}$ and estimated symbol $\hat{s}_{k,n}$, i.e., $E_{k,n}\triangleq \mathbb{E}\{|\hat{s}_{k,n}-s_{k,n}|^2\}$, can be expressed as 
\begin{align}
    E_{k,n}(u_{k,n},\mathbf{P},\mathbf{W}_{n})
    &=|1-u_{k,n}^*\mathbf{h}_{k,n}^H\mathbf{P}\mathbf{w}_{k,n}|^2 \notag\\&+ |u_{k,n}|^2\left(\sum_{j\neq k}|\mathbf{h}_{k,n}^H\mathbf{P}\mathbf{w}_{j,n}|^2+\tilde{\sigma}^2\right).
\end{align}
By introducing weight auxiliary variables $\{\bm\omega_n\}_{n=1}^N$, and applying the transformation in \cite{WMMSE}, we can equivalently reformulate problem \eqref{eq:max_sum_rate_primal} into the following WMMSE form:
\begin{subequations}\label{eq:max_sum_rate_wmmse}
    \begin{align}                    \min_{\mathbf{P},\{\mathbf{W}_n\},\{\mathbf{u}_n\},\{\bm\omega_n\}}~& \frac{1}{N}\sum_{n=1}^N\sum_{k=1}^K  \left(\frac{1}{\ln 2}\omega_{k,n}E_{k,n}-\log_2(\omega_{k,n})\right)\label{eq:max_sum_rate_wmmse_a}\\
    \mathrm{s.t.}~~~~~~~~~&\text{\eqref{eq:max_sum_rate_primal_b}, \eqref{eq:max_sum_rate_primal_c}},\label{eq:max_sum_rate_wmmse_b}
    \end{align}
\end{subequations}
where $\mathbf{u}_n=[u_{1,n},u_{2,n},\ldots,u_{K,n}]\in\C^{K\times 1}$ and $\bm\omega_n=[\omega_{1,n},\omega_{2,n},\ldots,\omega_{K,n}]\in\mathbb{R}^{K\times 1}$ with $\omega_{k,n}>0,~\forall n\in\mathcal{N}$. Since problem \eqref{eq:max_sum_rate_wmmse} is convex with respect to each variable block, we apply the block coordinate descent (BCD) method to minimize \eqref{eq:max_sum_rate_wmmse_a}. Next, we discuss the subproblem of each variable block separately. 

\subsubsection{Update of $\{\mathbf{u}_n\}$ and $\{\bm\omega_n\}$}
By setting the partial derivatives of the objective function with respect to $u_{k,n}$ and $\omega_{k,n}$ to zero, respectively, while fixing the other variables, we obtain the updates of $\mathbf{u}_n$ and $\bm\omega_n$:
\begin{equation}
    u_{k,n}=\frac{\mathbf{h}_{k,n}^H\mathbf{P}\mathbf{w}_{k,n}}{\sum_{j=1}^K|\mathbf{h}_{k,n}^H\mathbf{P}\mathbf{w}_{j,n}|^2+\tilde{\sigma}^2},\label{eq:u-subproblm_solution}
\end{equation}
and
\begin{equation}
    \omega_{k,n}=(1-u_{k,n}^*\mathbf{h}_{k,n}^H\mathbf{P}\mathbf{w}_{k,n})^{-1},~\forall n\in\mathcal{N},~\forall k\in\mathcal{K}.\label{eq:omega-subproblm_solution}
\end{equation}
\subsubsection{Update of $\{\mathbf{W}_n\}$}
For fixed $\{\mathbf{u}_n\}$, $\{\bm\omega_n\}$ and $\mathbf{P}$, digital precoders $\{\mathbf{W}_n\}_{n=1}^N$ are updated by solving the following subproblem:
\begin{subequations}\label{eq:max_sum_rate_wmmse_W}
    \begin{align}                    \min_{\{\mathbf{W}_n\}}~& \sum_{n=1}^N\sum_{k=1}^K  \left(\mathbf{w}_{k,n}^H\mathbf{Q}_n\mathbf{w}_{k,n}-2\Re\{\mathbf{a}_{k,n}^H\mathbf{w}_{k,n}\}\right)\label{eq:max_sum_rate_wmmse_W_a}\\\mathrm{s.t.}~~&\text{\eqref{eq:max_sum_rate_primal_c}},\label{eq:max_sum_rate_wmmse_W_b}
    \end{align}
\end{subequations}
where $\mathbf{Q}_n\triangleq\mathbf{P}^H\left( \sum_{j=1}^K \omega_{j,n} |u_{j,n}|^2 \mathbf{h}_{j,n} \mathbf{h}_{j,n}^H\right)\mathbf{P}$ and $\mathbf{a}_{k,n}\triangleq\omega_{k,n}u_{k,n}\mathbf{P}^H\mathbf{h}_{k,n},~\forall n\in\mathcal{N},~\forall k\in\mathcal{K}$. Problem \eqref{eq:max_sum_rate_wmmse_W} is a convex quadratically constrained quadratic program and can be optimally solved using the Karush-Kuhn-Tucker (KKT) conditions \cite{WMMSE}. 

\subsubsection{Update of $\mathbf{P}$}
For fixed $\{\mathbf{u}_n\}$, $\{\bm\omega_n\}$, and $\{\mathbf{W}_n\}$, the MiLAC matrix $\mathbf{P}$ is updated by solving the following subproblem:
\begin{subequations}\label{eq:max_sum_rate_wmmse_P}
    \begin{align}                    \min_{\mathbf{P}}~&\sum_{n=1}^N \left( \text{Tr}(\mathbf{P}^H \mathbf{M}_n \mathbf{P} \mathbf{C}_n) - 2\Re\{ \text{Tr}(\mathbf{D}_n\mathbf{P}) \} \right)\label{eq:max_sum_rate_wmmse_P_a}\\\mathrm{s.t.}~&\text{\eqref{eq:max_sum_rate_primal_b}},\label{eq:max_sum_rate_wmmse_P_b}
    \end{align}
\end{subequations}
where $\mathbf{M}_n \triangleq \sum_{k=1}^K \omega_{k,n} |u_{k,n}|^2 \mathbf{h}_{k,n} \mathbf{h}_{k,n}^H$, $\mathbf{C}_n \triangleq \mathbf{W}_n \mathbf{W}_n^H$ and $\mathbf{D}_n \triangleq \sum_{k=1}^K \omega_{k,n} u_{k,n}^* \mathbf{w}_{k,n}\mathbf{h}_{k,n}^H$. Problem \eqref{eq:max_sum_rate_wmmse_P} has a quadratically objective function with a convex constraint. To efficiently solve it, we adopt the projected gradient descent (PGD) method. Let $\mathcal{P}\triangleq \{ \mathbf{P}\in\C^{N_\text{T}\times N_\text{RF}} \mid \|\mathbf{P}\|_2 \le 1 \}$ denote the feasible set defined by the spectral-norm constraint. The PGD algorithm updates $\mathbf{P}$ as
\begin{equation}
    \mathbf{P}^{(t+1)} = \text{Proj}_{\mathcal{P}} \left( \mathbf{P}^{(t)} - \eta \text{Grad}(\mathbf{P}^{(t)}) \right),
\end{equation}
where $\mathbf{P}^{(t)}$ denotes the value of $\mathbf{P}$ at the $t$-th iteration, $\text{Grad}(\mathbf{P}^{(t)})$ is the gradient of the objective function in \eqref{eq:max_sum_rate_wmmse_P_a} with respect to $\mathbf{P}$ at point $\mathbf{P}^{(t)}$, $\text{Proj}_{\mathcal{P}}$ represents the projection onto set $\mathcal{P}$, and $\eta>0$ is the step size. By simple calculation, 
\begin{equation}
    \text{Grad}(\mathbf{P}^{(t)})=2\sum_{n=1}^N \left( \mathbf{M}_n \mathbf{P}^{(t)} \mathbf{C}_n - \mathbf{D}_n^H \right).
\end{equation}
For a given matrix $\mathbf{X}\in\C^{N_\text{T}\times N_\text{RF}}$, the projection $\text{Proj}_{\mathcal{P}}(\mathbf{X})$ is performed via singular value thresholding, which clips each singular value of $\mathbf{X}$ at $1$. Specifically, let the SVD of $\mathbf{X}$ be $\mathbf{X}=\mathbf{U}_X \mathbf{\Sigma}_X \mathbf{V}_X^H$,  where $\mathbf{\Sigma}_X = [\text{diag}(\sigma_1, \sigma_2, \ldots,\sigma_{N_\text{RF}}),\mathbf{0}_{N_\text{RF}\times (N_\text{T}-N_\text{RF})}]^T\in\C^{N_\text{T}\times N_\text{RF}}$ contains all singular values (assume that $N_\text{RF}\leq N_\text{T}$). The projection is  given by
\begin{equation}
    \text{Proj}_{\mathcal{P}}(\mathbf{X}) = \mathbf{U}_X \tilde{\mathbf{\Sigma}}_X \mathbf{V}_X^H,    
\end{equation}
where $\tilde{\mathbf{\Sigma}}_X = [\text{diag}(\tilde{\sigma}_1, \tilde{\sigma}_2, \ldots,\tilde{\sigma}_{N_\text{RF}}),\mathbf{0}_{N_\text{RF}\times (N_\text{T}-N_\text{RF})}]^T$ with $\tilde{\sigma}_m = \min\{1, \sigma_m\}$, $m=1,2,\ldots,N_\text{RF}$. 

For the convex $\mathbf{P}$-update subproblem \eqref{eq:max_sum_rate_wmmse_P}, the convergence of the PGD algorithm is guaranteed when the step size $\eta\leq L_G^{-1}$, where $L_G$ is the Lipschitz constant of $\text{Grad}(\mathbf{P})$. For any two MiLAC matrices $\mathbf{P}_1$ and $\mathbf{P}_2$,
\begin{align}
  &  \|\text{Grad}(\mathbf{P}_1)-\text{Grad}(\mathbf{P}_2)\|_F\notag\\
    &= \left\|2\sum_{n=1}^N\mathbf{M}_n (\mathbf{P}_1-\mathbf{P}_2) \mathbf{C}_n \right\|_F\notag\\
    &\leq2\sum_{n=1}^N\left\|\mathbf{M}_n (\mathbf{P}_1-\mathbf{P}_2) \mathbf{C}_n \right\|_F\notag\\
    &\leq2\sum_{n=1}^N\|\mathbf{M}_n \|_2 \|\mathbf{C}_n\|_2\|\mathbf{P}_1-\mathbf{P}_2  \|_F.
\end{align}
Therefore,  $L_G=2\sum_{n=1}^N\|\mathbf{M}_n \|_2\|\mathbf{W}_n\|_2^2$ is a Lipschitz constant of $\text{Grad}(\mathbf{P})$, and thus we set the stepsize as $\eta=1/(2\sum_{n=1}^N\|\mathbf{M}_n \|_2 \|\mathbf{W}_n\|_2^2)$.

At each iteration of BCD, every variable block is solved to global optimality. In particular, the 
$\{\mathbf{u}_n\}$- and $\{\bm{\omega}_n\}$-subproblems admit closed-form solutions in \eqref{eq:u-subproblm_solution} and \eqref{eq:omega-subproblm_solution}, respectively. The $\{\mathbf{W}_n\}$-subproblem is convex and has a unique solution \cite{WMMSE}. The $\mathbf{P}$-subproblem is also convex, and the PGD algorithm is guaranteed to converge to its global optimal solution under the specified stepsize $\eta$. Therefore, the WMMSE algorithm yields a monotonically non-increasing objective function  and is guaranteed to converge to a stationary point of problem \eqref{eq:max_sum_rate_primal} according to  \cite[Theorem 3]{WMMSE}.

\section{Simulation Results}
\subsection{Simulation Setup}
In the considered MU-MISO-OFDM system, we set the bandwidth $B$ to 300 MHz, the number of subcarriers, transmit antennas, and users as $N=64$, $N_\text{T}=64$, and $K=4$, respectively. The OFDM channels are modeled as a frequency-selective $D$-tap channels with $D=16$. The cyclic prefix length is set to 15. The time-domain channels of user $k$ is denoted as $\{\bar{\mathbf{h}}_{k,0},\bar{\mathbf{h}}_{k,1},\ldots,\bar{\mathbf{h}}_{k,D-1}\}$, where the $d$-th tap is given by $\bar{\mathbf{h}}_{k,d}\sim\mathcal{CN}(\mathbf{0},p_d\mathbf{I}_{N_\text{T}}),~d=0,1,\ldots,D-1$, and $\{p_d\}_{d=0}^{D-1}$ denotes the normalized power delay profile (PDP) that satisfies $\sum_{d=0}^{D-1}p_d=1$ and each $p_d \propto e^{-d/\epsilon}$, with $\epsilon$ being the PDP decay factor. The frequency-domain channel $\mathbf{h}_{k,n}\in\C^{N_\text{T}\times 1}$ is obtained by $\mathbf{h}_{k,n}=\sum_{d=0}^{D-1}\bar{\mathbf{h}}_{k,d}e^{-\jmath \frac{2\pi}{N}(n-1)d},~\forall n\in\mathcal{N},~\forall k\in\mathcal{K}$. To characterize the degree of frequency selectivity, we use the normalized root mean square (RMS) delay spread, i.e., $\tau_{\text{rms}}/T_s$, where $T_s=1/B$ is the sampling period and $\tau_{\text{rms}}=T_s\sqrt{\sum_{d=0}^{D-1} p_d\left(d-\bar{d}\right)^2}$ with $\bar d=\sum_{d=0}^{D-1} p_d d$. The noise power is normalized as $\sigma_{k,n}^2=1,~\forall n\in\mathcal{N},~\forall k\in\mathcal{K}$, so the signal-to-noise ratio (SNR) is defined as $\frac{P_\text{T}}{N}$. 

\subsection{Benchmark Schemes}
To show the effectiveness of the hybrid digital-MiLAC beamforming architecture in wideband MU-MISO systems, we compare it with the following schemes: the conventional fully-digital beamforming \cite{WMMSE}; the conventional hybrid digital-analog beamforming \cite{hybrid_ofdm}, where the analog part is implemented by a fully-connected phase-shifter network; MiLAC-aided beamforming \cite{zheyu_milac_beamforming}, and analog beamforming, where the MiLAC is replaced by a fully-connected phase-shifter network. In the following subsection, the results for fully-digital beamforming are obtained using the WMMSE algorithm in \cite{WMMSE}, the results for hybrid digital-analog beamforming are obtained using the approach in \cite{hybrid_ofdm}, and the results for MiLAC-aided beamforming are obtained using the algorithm in \cite{zheyu_milac_beamforming}.

\subsection{Performance Comparison}
In Fig. \ref{fig:SNR}, we plot the average sum-rate versus SNR for different beamforming schemes. Except for the fully-digital beamforming that has $N_\text{RF}=N_\text{T}$, all other schemes have $N_\text{RF}=K$. Compared with fully-digital beamforming, all other beamforming schemes incur a performance loss, which is in contrast to the narrowband case where the hybrid digital-MiLAC architecture can achieve the same performance as digital beamforming. Among them, hybrid digital-MiLAC performs closest to digital beamforming and consistently outperforms conventional hybrid digital-analog beamforming. For the two fully analog beamforming schemes, MiLAC beamforming outperforms analog beamforming implemented by the fully-connected phase shifters. This is because the MiLAC architecture provides greater flexibility in beamforming design. Overall, the hybrid digital-MiLAC architecture offers a favorable tradeoff between performance and hardware complexity. For example, at SNR $=10$ dB, it achieves $90.52\%$ of the sum-rate to digital beamforming while using only $6.25\%$ of the number of RF chains.

\begin{figure}
    \centering
    \includegraphics[width=0.4\textwidth]{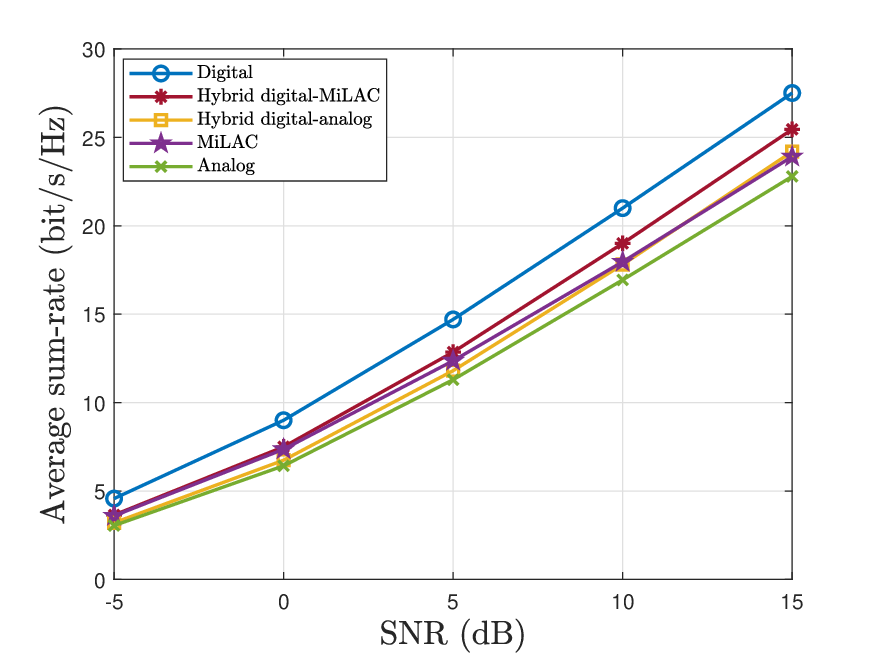}
    \caption{Average sum-rate versus SNR for different beamforming schemes, where $N_\text{RF}=K$ for all schemes except digital beamforming with $N_\text{RF}=N_\text{T}$, and $\epsilon=0.8$.}\label{fig:SNR}
\end{figure}

\begin{figure}
    \centering
    \includegraphics[width=0.4\textwidth]{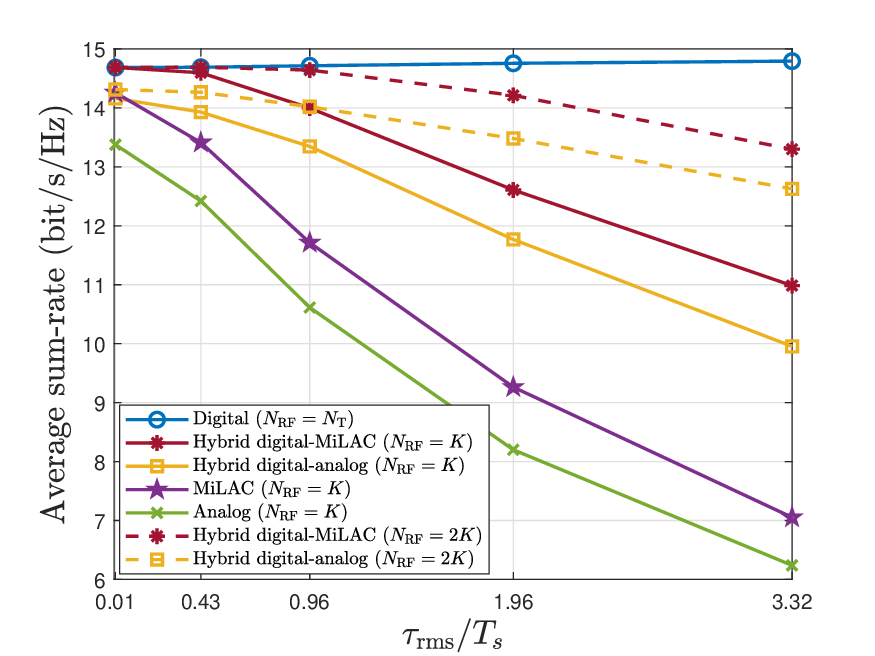}
    \caption{Average sum-rate versus normalized RMS delay spread $\tau_{\text{rms}}/T_s$ for different beamforming schemes ($\text{SNR}=5 \ \text{dB}$).}\label{fig:tau_rms}
\end{figure}

In Fig. \ref{fig:tau_rms}, we investigate the impact of frequency selectivity on the performance of different beamforming schemes by plotting the average sum-rate versus $\tau_{\text{rms}}/T_s$. Different values of $\tau_{\text{rms}}/T_s$ are obtained by varying the PDP decay factor $\epsilon\in\{0.1,\,0.5,\,1,\,2,\,4\}$. It can be observed that when $\epsilon=0.1$, i.e., $\tau_{\text{rms}}/T_s=0.01$, corresponding to a nearly frequency-flat case, MiLAC-aided beamforming and analog beamforming with $N_{\text{RF}}=K$ achieve $97.07\%$ and $91.08\%$ of the fully-digital sum-rate, respectively. The hybrid digital-MiLAC beamforming achieves the same performance as digital beamforming with only $K$ RF chains, which validates the discussions below Remark \ref{remark1}. In contrast, conventional hybrid beamforming still exhibits a performance gap compared with digital beamforming even when  $N_{RF}=2K$. As $\tau_{\text{rms}}/T_s$ increases, i.e., channels become more frequency-selective, the performance of all hybrid and analog beamforming schemes degrades. This is because the analog/MiLAC beamforming matrix is common to all subcarriers, leading to inaccurate beamforming directions that do not match the frequency-dependent channel responses. In particular, purely MiLAC- and analog-aided beamforming perform very poorly in highly frequency-selective channels. However, hybrid digital-MiLAC beamforming can still achieve nearly the same performance as fully-digital beamforming at $\tau_{\text{rms}}/T_s=0.96$. Even for a highly frequency-selective channel with $\tau_{\text{rms}}/T_s=3.32$, it still achieves $89.93\%$ of the fully-digital sum-rate while using only $12.5\%$ of the number of RF chains.

\section{Conclusion}
In this paper, we studied hybrid digital-MiLAC beamforming for MU-MISO-OFDM systems. Under a frequency-flat, lossless, and reciprocal MiLAC model, we characterized the minimum number of RF chains required to realize arbitrary digital beamforming matrices over all subcarriers, and developed a WMMSE-based algorithm for average sum-rate maximization. We showed that the number of RF chains required to achieve digital beamforming flexibility is significantly larger than that in the narrowband case for highly frequency-selective channels. Nevertheless, simulation results showed that hybrid digital-MiLAC beamforming can still achieve comparable performance to digital beamforming with much fewer RF chains, and consistently outperforms hybrid digital-analog beamforming across different levels of frequency selectivity. Overall, this work provides a first step toward MiLAC-aided beamforming in wideband systems and highlights its potential for future large-scale wireless networks. Future work may consider characterizing the frequency-dependent behavior of MiLAC and extending the proposed framework to frequency-selective MiLAC architectures.

\bibliographystyle{IEEEtran}
\bibliography{refs}

\end{document}